\documentclass{article}

\usepackage[letterpaper,top=3cm,bottom=3cm,left=3cm,right=3cm,marginparwidth=1.75cm]{geometry}

\usepackage{xspace}
\usepackage{multirow}
\usepackage{amsmath,amsthm,amssymb}
\usepackage{cite}
\usepackage{comment}
\usepackage{graphicx}
\usepackage{listings}
\usepackage{mathtools}
\usepackage{url}
\usepackage{xcolor}
\usepackage{booktabs}

\usepackage[ruled,titlenotnumbered,noend,noline]{algorithm2e}

\newcommand{\tuple}[1]{\ensuremath{\left\langle #1 \right\rangle}\xspace}
\newcommand{\size}[1]{\ensuremath{\left| #1 \right|}\xspace}
\newcommand{\set}[1]{\ensuremath{\left\{ #1 \right\}}\xspace}
\newcommand{\cp}{\ensuremath{\textbf{P}}\xspace}
\newcommand{\cnp}{\ensuremath{\textbf{NP}}\xspace}
\newcommand{\cnpc}{\ensuremath{\textbf{NP}\text{-complete}}\xspace}
\newcommand{\cnph}{\ensuremath{\textbf{NP}\text{-hard}}\xspace}
\newtheorem{theorem}{Theorem}
\newtheorem{definition}{Definition}

\title{Minimizing the Number of Roles in Bottom-Up Role-Mining using Maximal Biclique Enumeration}

\author{Mahesh Tripunitara\\
tripunit@uwaterloo.ca\\
University of Waterloo\\
Canada}
%}

\begin{document}
\maketitle

\begin{abstract}
  Bottom-up role-mining is the determination of a set of roles given as input a set of users and the permissions those users possess. It is well-established in the research literature, and in practice, as an important problem in information security. A natural objective that has been explored in prior work is for the set of roles to be of minimum size. We address this problem for practical inputs while reconciling foundations, specifically, that the problem is \cnph. We first observe that an approach from prior work that exploits a sufficient condition for an efficient algorithm, while a useful first step, does not scale to more recently proposed benchmark inputs. We propose a new technique: the enumeration of maximal bicliques. We point out that the number of maximal bicliques provides a natural measure of the hardness of an input. We leverage the enumeration of maximal bicliques in two different ways. Our first approach addresses more than half the benchmark inputs to yield exact results. The other approach is needed for hard instances; in it, we identify and adopt as roles those that correspond to large maximal bicliques. We have implemented all our algorithms and carried out an extensive empirical assessment, which suggests that our approaches are promising. Our code is available publicly as open-source.
\end{abstract}

\section{Introduction}\label{sec:intro}
Access control is used to determine the actions a user may perform on a resource.  It is recognized in research and practice as an essential aspect of information security. Underlying access control is a \emph{policy}, which expresses the accesses that should be allowed. Such a policy may say, for example, that a user Alice may be allowed to read a file, but not write it. Two well-established syntaxes for the policy are the access matrix \cite{hru76} and role-based access control \cite{sandhu96}.

\begin{figure}[h]
\centering
\includegraphics[width=0.67\linewidth]{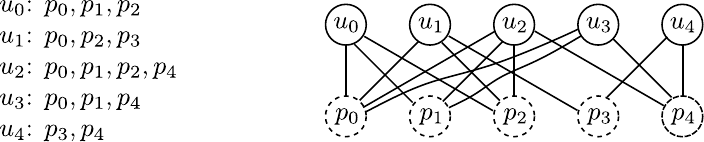}
\caption{An access matrix, as it pertains to this work, as permissions a user possesses, and as a bipartite graph. Users $u_0$ -- $u_4$ are shown as solid circles, and permissions $p_0$ -- $p_4$ as dotted circles.}
\label{fig:irreducible:am}
\end{figure}

The former, as it pertains to this work, is a \emph{bipartite graph} --- Figure \ref{fig:irreducible:am} shows an example. An (undirected) graph is a pair $\tuple{V, E}$ of vertices $V$ and edges $E$, where each member of $E$ is a subset of $V$ of size exactly two. A bipartite graph is a graph in which the vertices can be partitioned into two, call them $V_1$ and $V_2$, such that given any edge $\tuple{u_1, u_2}$, either $u_1 \in V_1$ and $u_2 \in V_2$, or $u_1 \in V_2$ and $u_2 \in V_1$. An access matrix is a bipartite graph in which the partition of the set of vertices is to a set of users $U$ and a set of permissions $P$, and an edge $\tuple{u,p}$, where $u\in U, p\in P$, expresses that the user $u$ possesses permission $p$. Other ways of perceiving an access matrix, such as one in which users (or subjects) are assigned rights over objects \cite{hru76}, can easily be encoded in this syntax --- simply adopt as a permission a pair $\tuple{\text{object}, \text{right}}$.

\begin{figure*}
\centering
\includegraphics[width=0.75\linewidth]{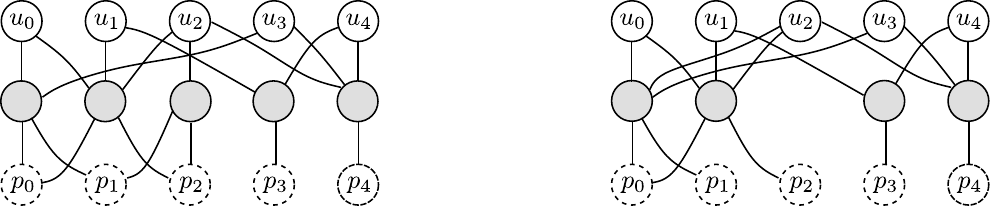}
\caption{Two role-based policies that are equivalent in authorizations to the access matrix of Figure \ref{fig:irreducible:am}. Roles are shown as shaded circles. The policy to the right has four roles, which happens to be the minimum possible for this access matrix.}
\label{fig:irreducible:rbac}
\end{figure*}

In Role-Based Access Control (RBAC), a user is assigned \emph{roles} from a set $R$, each of which is assigned permissions. It is a special case of a \emph{tripartite} graph in which the edges are between $U$ and $R$, and $R$ and $P$ only; there are no edges between $U$ and $P$. A user possesses those permissions that are assigned to the roles to which the user is assigned. The two policies shown in Figure \ref{fig:irreducible:rbac} are equivalent to one another, and in turn, equivalent to the access matrix in Figure \ref{fig:irreducible:am} in that in all three policies, every user possesses the same permissions. The policy to the right in Figure \ref{fig:irreducible:rbac}, however, is one in which the set of roles is of smallest size; it turns out that for the access matrix of Figure \ref{fig:irreducible:am}, we need at least four roles in any role-based policy that is equivalent from the standpoint of authorizations.

\emph{Role mining}, which is the focus of our work, is the problem, given as input an access matrix such as the one in Figure \ref{fig:irreducible:am}, of determining a role-based policy such as one of the ones in Figure \ref{fig:irreducible:rbac}. Apart from computing a set of roles and relationships that preserves the authorizations of the input access matrix, a basic \emph{soundness} requirement, one typically associates additional goodness criteria with the problem. The goodness criterion that is the focus of our work is minimization of the set of roles that is output. Thus, in our work, the role-based policy to the right in Figure \ref{fig:irreducible:rbac} would be correct, but the policy to the left would not.

Role-mining is motivated by the benefits of using RBAC for access control \cite{sandhu96}. To use RBAC, one must first express one's access control policy as an RBAC policy. Role-mining is an approach to achieve this. To our knowledge, the first work that explicitly addresses such a problem as role mining is that of Vaidya et al.~\cite{vaidya06}. That work, in turn, credits Coyne \cite{coyne96} as pointing out that the identification of roles is the first step in realizing RBAC, and the work of Gallagher et al.~\cite{gallagher02} for pointing out that the identification of roles is the most expensive step in realizing RBAC. That work also distinguishes a top-down approach to the identification of roles, from a bottom-up approach; the latter is the focus of our work.

While the work of Vaidya et al.~\cite{vaidya06} does not explicitly mention the minimization of the number of roles as a goodness criterion, work that followed soon after proposes exactly such a criterion \cite{vaidya07}. The work of Ene et al.~\cite{ene08} is the first, to our knowledge, to identify the correspondence between minimizing the number of roles, and minimizing the size of a \emph{biclique cover} of a bipartite graph. Given a bipartite graph $\tuple{V_1 \cup V_2, E}$, a \emph{biclique} is subsets $W_1 \subseteq V_1$, $W_2 \subseteq V_2$ such that given any $w_1\in W_1$ and $w_2\in W_2$, it is the case that $\tuple{w_1, w_2}\in E$. Equivalently, rather than as vertices, we can perceive a biclique as the corresponding set of edges. Figure \ref{fig:irreducible:biclique} shows an example of a biclique for the access matrix of Figure \ref{fig:irreducible:am}. A biclique cover, $\mathcal{C}$, given a bipartite graph, is a set of bicliques $\set{C_1, \ldots, C_k}$ such that every edge in the graph is in some $C_i$ in $\mathcal{C}$.

The biclique we show in Figure \ref{fig:irreducible:biclique} happens to be \emph{maximal}: the addition of any edges that exist in the original access matrix no longer yields a biclique. We point out also that the biclique in that figure corresponds exactly to a role in the policy to the right in Figure \ref{fig:irreducible:rbac} that minimizes the number of roles --- the second role from the left. Indeed, every role in that policy corresponds to a maximal biclique; this is by intent, as we discuss in Section \ref{sec:enum}.

\begin{figure}
\centering
\includegraphics[width=0.4\linewidth]{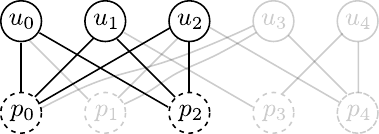}
\caption{The vertices and edges in bold show a maximal biclique for the access matrix from Figure \ref{fig:irreducible:biclique}; the other components are shown faded. Removing any vertex from $\set{u_0, u_1, u_2}\cup\set{p_0, p_2}$ still yields a biclique; however, no edge in the original graph can be added to still yield a biclique. This maximal biclique corresponds to the second role from the left in the role-based policy to the right in Figure \ref{fig:irreducible:rbac}.}
\label{fig:irreducible:biclique}
\end{figure}

From the standpoint of an efficient algorithm for the role minining problem that minimizes the numer of roles then, we ask how computationally hard the problem of computing a biclique cover of minimum size given a bipartite graph is. This problem is known to be \cnph, with a decision problem with which the optimization problem is related polynomially in \cnp (therefore \cnpc) \cite{garey1979}. Furthermore, it is in the hardest subclass of \cnp from the standpoint of efficient approximation  --- under the customary assumption $\cp \not= \cnp$, no polynomial-time algorithm can approximate to within a factor $n^\delta$ for any $\delta \in (0,1)$, where $n$ is the size of the input \cite{ene08}. Thus, we expect no efficient general-purpose algorithm to exist; not even one that approximates well. However, we know that instances of such problems that arise in practice are not necessarily hard, and can possess features that can be exploited for an efficient algorithm. Indeed, the work of Ene et al.~\cite{ene08} does exactly this for a set of benchmarks it proposes. That work ingeniously identifies and exploits a sufficient condition to trim the input so that on the benchmarks it proposes, we are left with at worst a graph of small size to address further (see Section \ref{sec:ene}).

\paragraph*{What is new?} Our motivation for revisiting the problem of role-mining for a minimum-sized set of roles is the more recent publication of a new set of benchmarks for role-mining \cite{anderer21,rmplib}.  The prior approach of Ene et al.~\cite{ene08} is highly effective for the initial set of benchmarks they propose. A natural question that arises is: is it effective for the new benchmarks as well? We point out, in this context, that Anderer et al.~\cite{anderer21} make the case that while the original benchmarks are indeed reflective of some real-world instances, their new benchmarks also capture real-world instances, but ones with different properties. If the answer to the above question is `no', i.e., that the approach of Ene et al.~\cite{ene08} is not effective on the new benchmarks, then is there an approach that is? Or do the new benchmarks comprise instances that are inherently hard? Put differently, under the customary assumption $\cp \not= \cnp$, we know that hard instances of the problem exist; are the new benchmarks such hard instances? Is there a meaningful way to identify whether a particular instance, one from benchmarks or one that arises in the real-world, is hard?

\renewcommand*{\thefootnote}{\fnsymbol{footnote}}

\paragraph*{Our work and contributions} We answer the above questions. We begin by revisiting the approach of Ene et al.~\cite{ene08} that is highly effective on their original benchmark. We claim two modest contributions with respect to it: a different exposition than in the original work, and an implementation that we make available as open-source \cite{mycode} \footnote[2]{\noindent We contacted the authors of \cite{ene08} and were told that an implementation is no longer available from them.}.  We establish that on the more recently proposed set of benchmarks \cite{anderer21,rmplib}, the approach is not effective by itself (see Table \ref{tab:dom:new} in Section \ref{sec:ene} and the associated discussion). While that algorithm can indeed be used as a first step to reduce the size of the problem to some extent, we need an approach for the rather sizeable problem that remains. 

We propose exactly such an approach. In our approach, we first observe that every role in a minimum-sized set can correspond to a maximal biclique, rather than a biclique only. This observation provides us with a natural measure of the hardness of an instance of the problem once we have run the algorithm of Ene et al.~\cite{ene08} as a first step: the total number of maximal bicliques in the access matrix. If that number is small, then we deem the instance to be ``easy''; otherwise, the instance is ``hard''. (In the worst-case, the number of maximal bicliques is exponential in the size of the input; however, it is logarithmic in the size of the set of all bicliques.)

If an instance is indeed easy, i.e., has relatively few maximal bicliques, then our approach is to enumerate all possible maximal bicliques, and have a constraint solver choose a set of minimum size from amongst them. Note that this does not necessarily render the problem that the constraint solver addresses easy in a computational sense --- we reason about this in Section \ref{sec:enum}. We have implemented this approach, and our results are that it addresses more than half the inputs in the new benchmark, i.e., yields us a minimum-sized set of roles. For the remainder of the benchmark-inputs, i.e., the ones that are hard, we do not expect there to exist an efficient approach, and are reliant on heuristics. We seek a heuristic that (i) demonstrates tangible progress and does not get ``stuck'', as a constraint-solver can \cite{solverstuck}, and, (ii) works well in practice, i.e., yields relatively few roles for the benchmark inputs. Towards this, we exploit the approach of enumerating maximal bicliques differently --- see Section \ref{sec:hard}. Our heuristic outperforms a prior heuristic, also from Ene et al.~\cite{ene08}, on all but a single benchmark-input. We claim our implementations of these heuristics as a contribution as well.

\paragraph*{Our code, the benchmarks, and their availability} Part of our contribution is an implementation which we release as open-source \cite{mycode}. Our empirical assessment is based on this implementation, and therefore is fully reproducible. We have assessed across two sets of benchmarks. One is from Ene et al.~\cite{ene08}. We do not have permission to redistribute it; see, however, the acknowledgements towards the end of this paper. There was a slight difference in what we acquired and the benchmark inputs as reported by Ene et al.~\cite{ene08}: an input instance named {customer} is missing, but two other instances, {mailer} and {univ}, have been added. 

The other, more recent, set of benchmark inputs is available publicly \cite{rmplib}. A useful attribute of the newer benchmarks as it pertains to our work is that the input access matrices appear to have been generated from existing role-based policies. As a consequence, most input instances come with a known upper-bound for the number of roles. This immediately gives us a way of assessing the quality of a technique to minimize the number of roles. We do exactly this --- see, for example, Table \ref{tab:maxsets} in Section \ref{sec:enum}, and Table \ref{tab:hard} in Section \ref{sec:hard}.

\paragraph*{Experimental setup and software} We ran all our experiments on a 64-core 4023S-TRT rack-mounted server with 256 GB RAM. It runs the Ubuntu 20.04 LTS operating system. Our code is written in Python 3. Our constraint solver is Gurobi version 11 \cite{gurobi}.

\paragraph*{Layout} The remainder of this paper is organized as follows. In the next section, we discuss related work. In Section \ref{sec:ene}, we describe the prior algorithm of Ene et al.~\cite{ene08} employing an exposition that is different from the original, and our empirical results on its effectiveness. In Section \ref{sec:enum}, we discuss our new approach based on enumerating maximal bicliques. We include in that section a discussion about the branch-and-price method from prior work on constraint-solving \cite{mehrotra96,barnhart98} of which our approach is an adaptation. In Section \ref{sec:hard}, we discuss a different way of leveraging maximal bicliques for hard instances. We conclude with Section \ref{sec:concl}, where we discuss also future work.

\section{Related Work}\label{sec:rel}
To our knowledge, the first work to call out the role mining problem is that of Vaidya et al.~\cite{vaidya06}. That work proposes a particular approach that it calls subset enumeration for generating a set of roles. Notwithstanding the occurrence of the term ``enumeration'' in both our title and theirs, their work is of a quite different nature: minimizing the number of roles is not an explicit objective; rather the intent is to generate ``natural'' roles that somehow meaningfully group permissions together. Work that appeared soon after \cite{vaidya07, ene08}, however, does explicitly propose minimization of the number of roles as an objective.

The work of Vaidya et al.~\cite{vaidya07} proposes also two other variants of the role-mining problem that are the beyond the scope of this work. Also, via a reduction from set basis, which is known to be \cnph \cite{garey1979}, it establishes that the problem of determining a minimum sized set of roles is \cnph, and a corresponding decision version is in \cnp (and is therefore \cnpc). It proposes also an algorithm based on tiling for the problem. The work of Ene et al.~\cite{ene08} identifies that minimizing the number of roles is exactly the problem of computing a minimum-sized biclique cover of a bipartite graph. Also, it identifies a sufficient condition and corresponding polynomial-time algorithm, which we revisit and employ as a first step in our approach --- see Section \ref{sec:ene}. It presents its insights via a reduction to clique partition \cite{garey1979}; we adopt the more direct rendition of bicliques. To our knowledge, the only subsequent work to target minimization of roles is that of Anderer et al.~\cite{anderer20}, which proposes an evolutionary algorithm. Their results are included as the best-known bound from an algorithm in the new benchmark suite \cite{rmplib}. Our results are stronger (i.e., smaller number of roles) on several of the new benchmark inputs for which they report a best-known bound.

\begin{figure*}
\centering
\includegraphics[width=0.75\linewidth]{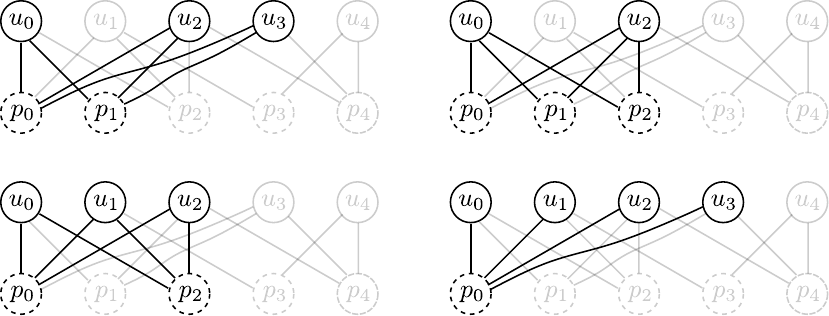}
\caption{The four maximal bicliques of which the edge $\tuple{u_0, p_0}$ is a member in the access matrix of Figure \ref{fig:irreducible:am} shown in bold. The edge $\tuple{u_2, p_0}$ is a member of all of them, and therefore, by Definition \ref{def:dominates}, dominates $\tuple{u_0, p_0}$.}
\label{fig:irreducible:oobicliques}
\end{figure*}

Since the first piece of work, there has been a proliferation of work on role-mining; Mitra et al.~\cite{mitra16} provide a survey. Most work has been in identifying meaningful goodness criteria other than role-minimization, such as roles with semantic meaning \cite{molloy09b}, roles that minimize the number of edges \cite{vaidya09} and RBAC policies that possess so called secrecy resilience \cite{guo22}. Molloy et al.~\cite{molloy09} propose ways of evaluating such varied role-mining algorithms, and corresponding ways to generate test-data. We are not aware of any benchmark datasets that have been published from these techniques, nor do we know whether the data generated is meaningful from the standpoint of instances that arise in the real-world. More recent work seeks to generalize the objective in role mining by incorporating a notion called noise, which is a controlled way of relaxing the soundness criterion which we mention in Section \ref{sec:intro} \cite{crampton22}.

There is recent work that leverages role-mining to discover bugs in smart contracts \cite{liu22}. The intent of using role-mining in such work is to discover the access control model in a role-based syntax, to then check for unauthorized accesses. Recent work that is more relevant to our work is that of Anderer et al.~\cite{anderer21} that proposes a new set of benchmarks for role-mining, which has been made available publicly \cite{rmplib}. This is exactly a set of benchmarks on which we evaluate our approaches. That work discusses the manner in which the benchmarks are generated, and the ways in which this new set of benchmark inputs is difference from those of Ene et al.~\cite{ene08}. The intent is not to explicitly target approaches that minimize the number of roles; rather, the intent behind the new set of benchmarks is to meaningfully capture real-world instances of the role-mining problem. As most instances in the benchmark appear to have been generated from role-based policies, they come with an upper-bound for the number of roles the output of a minimization algorithm can be checked against. We do exactly this in this work --- see Table \ref{tab:maxsets} in Section \ref{sec:enum}, and Tables \ref{tab:hard} and \ref{tab:hardest} in Section \ref{sec:hard}.

\section{A Prior Approach and Its Performance}
\label{sec:ene}

In this section, we revisit a sufficient condition for an efficient algorithm from prior work \cite{ene08}, and the algorithm that results from it. The sufficient condition relates to whether an edge in the input access matrix \emph{dominates} another edge. We discuss also the effectiveness of the algorithm on benchmark inputs. Apart from our assessment on the new benchmark inputs, we claim two somewhat modest contributions in this section. One is that our exposition is different from that of Ene et al.~\cite{ene08} --- that work first adopts a reduction to the clique partition problem for an undirected graph \cite{garey1979}, and then argues the soundness of the approach from the standpoint of cliques, i.e., vertices each pair of whom has an edge in the resultant graph. Our exposition stays with the notion of bicliques, which corresponds more directly to our problem: a role in an output RBAC policy from role-mining is exactly a biclique in the input access matrix. Our other contribution is an open-source implementation \cite{mycode}. Our implementation deals directly with bicliques in the input access matrix, rather than first reducing to clique partition.

\begin{definition}[Dominates]\label{def:dominates}
We say that an edge $\tuple{u_d, p_d}$ in an access matrix dominates an edge $\tuple{u, p}$ if: $\tuple{u_d, p_d}$ is a member of every maximal biclique of which $\tuple{u,p}$ is a member.
\end{definition}

Figure \ref{fig:irreducible:oobicliques} shows an example for the access matrix of Figure \ref{fig:irreducible:am}. As the edge $\tuple{u_2, p_0}$ is a member of every maximal biclique of which $\tuple{u_0, p_0}$ is a member, the former dominates the latter.

The benefit from identifying a dominator $\tuple{u_d, p_d}$ of an edge $\tuple{u, p}$ is articulated in Theorem \ref{thm:dominates} below: $u_d$ can acquire $p_d$ through any role $r$ through which $u$ acquires $p$. Before we state and prove the theorem, we recall what the subgraph induced by $S\subseteq V$ of an undirected graph $\tuple{V, E}$ is. Such a subgraph is itself an undirected graph, $\tuple{S, E_s}$, where $E_s = \set{\tuple{u,v} : \tuple{u,v} \in E\text{ and } \set{u,v} \subseteq S}$. That is, the subgraph induced by $S\subseteq V$ comprises all the edges in the original graph both of whose endpoint vertices are in $S$. 

\begin{theorem}\label{thm:dominates}
Suppose the edge $\tuple{u_d, p_d}$ dominates $\tuple{u,p}$ in the input access matrix. Then, there exists an RBAC policy with the minimum number of roles in which $u_d$ acquires $p_d$ through the same role through which $u$ acquires $p$.
\end{theorem}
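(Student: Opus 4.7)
My plan is to take an arbitrary minimum RBAC policy and massage it into one in which every role corresponds to a maximal biclique, and then invoke Definition~\ref{def:dominates} directly.

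First, I will observe that an RBAC policy that is equivalent to the input access matrix corresponds exactly to a biclique cover of the bipartite graph: each role $r$ with assigned user-set $U_r$ and permission-set $P_r$ is a biclique $\langle U_r, P_r\rangle$ (every such user-permission pair is an edge, by equivalence), and the collection of roles covers every edge of the access matrix (since every user-permission pair in the access matrix must be acquired through some role). So a minimum-sized RBAC policy corresponds to a minimum-sized biclique cover.

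Next, I will argue that without loss of generality every role in a minimum-sized policy can be taken to be a maximal biclique. Let $\mathcal{C} = \{C_1, \ldots, C_k\}$ be a minimum biclique cover. For each $C_i$, since $C_i$ is a biclique in the access matrix, we can greedily add vertices to $C_i$ from the access matrix as long as the result remains a biclique, to obtain a maximal biclique $C_i'\supseteq C_i$. The family $\mathcal{C}' = \{C_1', \ldots, C_k'\}$ is still a biclique cover (covering a superset of the edges covered by $\mathcal{C}$, all of which are edges of the access matrix by maximality), and it has size at most $k$; after discarding any duplicates we still have a cover of size at most $k$. By minimality, this cover is also of size exactly $k$. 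So we may assume our minimum policy has the property that every role is a maximal biclique of the input access matrix.

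Finally, let $r$ be any role in this policy through which $u$ acquires $p$; equivalently, $r$ is a role whose corresponding maximal biclique contains the edge $\langle u, p\rangle$. Such an $r$ exists because $\mathcal{C}'$ is a cover of the access matrix and $\langle u,p\rangle$ is an edge of the access matrix. Since $\langle u_d, p_d\rangle$ dominates $\langle u, p\rangle$, by Definition~\ref{def:dominates} the edge $\langle u_d, p_d\rangle$ belongs to every maximal biclique containing $\langle u, p\rangle$, and in particular belongs to $r$. That is, $u_d \in U_r$ and $p_d \in P_r$, so $u_d$ acquires $p_d$ through the very same role $r$, which is what the theorem asserts.

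The one step that requires a small amount of care is the ``extend to maximal'' reduction, since the greedy extension is non-deterministic and different extensions can yield different maximal bicliques; the key point is that we just need some maximal extension to exist for each $C_i$, and it does because the access matrix has finitely many vertices. Everything else is a direct application of the definition of domination.
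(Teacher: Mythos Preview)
Your proof is correct and rests on the same core idea as the paper's: extend the biclique containing $\tuple{u,p}$ to a maximal one, then invoke Definition~\ref{def:dominates} to conclude that $\tuple{u_d,p_d}$ lies in it. The structural difference is that the paper works \emph{locally}: it takes an arbitrary minimum policy, looks only at the single role $r$ through which $u$ acquires $p$, extends just that role's biclique $C$ to some maximal $M_C$, and observes that $\set{u_d,p_d}$ can be adjoined to $r$ without breaking soundness. You instead work \emph{globally}, first extending \emph{every} role to a maximal biclique---which is precisely the content of Theorem~\ref{thm:maxcliques}, proved later in the paper---and then reading off the conclusion for $r$ directly. Your route front-loads a stronger normalization lemma and makes the final step a one-line application of the definition; the paper's route touches only the one role and avoids modifying the rest of the policy. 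Both are equally valid, and neither is materially shorter.
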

\begin{proof}
Let $A$ be an RBAC policy with the minimum number of roles, and suppose $u$ acquires $p$ through the role $r$ in $A$. Suppose the users assigned to $r$ in $A$ are $\set{u_1, \ldots, u_m}$ and the permissions to which $r$ is assigned are $\set{p_1, \ldots, p_n}$. If $\set{u_d, p_d}\subseteq \set{u_1, \ldots, u_m, p_1, \ldots, p_n}$, then we have nothing left to prove. Otherwise, we prove by construction. We observe that the vertices $\set{u_1, \ldots, u_m, p_1, \ldots, p_n}$ induce a biclique, denote it $C$, in the input access matrix. Consider a maximal biclique, denote it $M_C$, which contains $C$. We know that $\tuple{u_d, p_d}\in M_C$ because $\set{u, p}\subseteq\set{u_1, \ldots, u_m, p_1, \ldots, p_n}$, which implies $\tuple{u,p}\in M_C$, and $\tuple{u_d, p_d}$ is in every maximal biclique which contains $\tuple{u, p}$. Thus, the vertices $\set{u_1, \ldots, u_m, p_1, \ldots, p_n} \cup \set{u_d, p_d}$ induce a biclique in the input access matrix. Therefore, changing $A$ to additionally assign $u_d$ and $p_d$ to $r$ leaves a sound RBAC policy with the same number of roles as $A$.
\end{proof}

For additional clarity, we point out that the proof does not go through if $\tuple{u_d, p_d}$ does not dominate $\tuple{u,p}$, because given some $u^\prime, p^\prime$ such that $u^\prime$ acquires $p^\prime$ through the role $r$ through which $u$ acquires $p$, $\set{u^\prime, u_d, p^\prime, p_d}$ may not induce a biclique in the input access matrix.

Thus, in an algorithm, we can simply remove the edge $\tuple{u_d, p_d}$ from the input, remember that it dominates $\tuple{u,p}$ as a bookkeeping exercise, and in a solution set of roles, if $u$ acquires $p$ through role $r$, simply additionally assign $u_d$ to $r$ and $r$ to $p_d$. This shrinks our input by one edge.

A next question is whether there exists an efficient algorithm to identify such dominators. Towards this, we first define the notion of \emph{adjacency} between two edges, and then a theorem that underlies an efficient algorithm. 

\begin{definition}[Adjacency]\label{def:adjacency}
We say that edges $\tuple{a, b}$ and $\tuple{c, d}$ in an access matrix $G$ are adjacent (equivalently, neighbours) if the subgraph induced by the vertices $\set{a, b, c, d}$ is a biclique.
\end{definition}

For example, the edges $\tuple{u_0, p_0}$ and $\tuple{u_2, p_0}$ are adjacent in the access matrix of Figure \ref{fig:irreducible:am}. More generally, if either $a = c$ or $b = d$, then the edges $\tuple{a, b}$ and $\tuple{c, d}$ are adjacent, assuming both those edges exist. (Of course, an edge can be said to be adjacent to itself.) As another example, the edges $\tuple{u_0, p_0}$ and $\tuple{u_2, p_1}$ are adjacent because the edges $\tuple{u_0, p_1}$ and $\tuple{u_2, p_0}$ exist as well.

\begin{theorem}\label{thm:neighbours}
Suppose $\text{neighbours}(e)$ is the set of neighbours of an edge $e$ in an access matrix. Then, an edge $d$ dominates an edge $e$ if and only if $\set{d}\cup\text{neighbours}(d) \supseteq \set{e}\cup\text{neighbours}(e)$.
\end{theorem}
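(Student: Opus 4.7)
The plan is to prove both directions directly from the definitions, using the basic fact that two edges of a bipartite graph are adjacent (in the sense of Definition \ref{def:adjacency}) exactly when some biclique contains both of them.

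For the ($\Leftarrow$) direction, I would start with an arbitrary maximal biclique $M$ containing $e$ and argue that $d\in M$. The key observation is that any edge $e'\in M$ lies in the biclique $M$ together with $e$, so $e'$ is either equal to $e$ or adjacent to $e$; in either case $e'\in\{e\}\cup\mathrm{neighbours}(e)$. By the hypothesis of the theorem, this gives $e'\in\{d\}\cup\mathrm{neighbours}(d)$. Now split on cases: if some $e'\in M$ equals $d$, we are done. Otherwise every edge of $M$ is adjacent to $d$, and I would show that $M\cup\{d\}$ extends to a biclique, contradicting maximality. Concretely, if $M$ has user-side vertices $W_1\subseteq U$ and permission-side vertices $W_2\subseteq P$, and $d=\tuple{u_d,p_d}$, then adjacency of $d$ with each $\tuple{u_i,p_j}\in M$ forces all four edges of the induced $K_{2,2}$ to exist; that is, $\tuple{u_i,p_d}$ and $\tuple{u_d,p_j}$ are edges for every $u_i\in W_1$ and $p_j\in W_2$. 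Together with $\tuple{u_d,p_d}$ itself, this exhibits $(W_1\cup\{u_d\})\times(W_2\cup\{p_d\})$ as a biclique strictly larger than $M$ (since $d\notin M$), contradicting maximality. Hence $d\in M$.

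For the ($\Rightarrow$) direction, I would unfold the quantifier in ``$\{d\}\cup\mathrm{neighbours}(d)\supseteq\{e\}\cup\mathrm{neighbours}(e)$'' and verify it edge by edge. Given any $e'\in\{e\}\cup\mathrm{neighbours}(e)$, the four endpoints of $e$ and $e'$ induce a biclique $B$ (by Definition \ref{def:adjacency}, treating $e'=e$ as the degenerate case), and every biclique extends to some maximal biclique $M\supseteq B$. Since $e\in M$ and $d$ dominates $e$, we have $d\in M$, so $d$ and $e'$ are both edges of the same biclique $M$, which makes them either equal or adjacent. Thus $e'\in\{d\}\cup\mathrm{neighbours}(d)$.

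The main obstacle, and the only step that is not a straightforward unfolding of the definitions, is the maximality argument in the ($\Leftarrow$) direction: showing that if $d$ is adjacent to every edge of a maximal biclique $M$, then $d$ actually lies in $M$. The extension step has to be written out carefully so that one exhibits a genuinely larger biclique (not merely a larger set of edges that fails to be bipartite-complete) when $d\notin M$, which is why I would make the construction of $(W_1\cup\{u_d\})\times(W_2\cup\{p_d\})$ explicit as above.
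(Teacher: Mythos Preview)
Your proposal is correct and follows essentially the same argument as the paper: both directions hinge on the fact that edges sharing a biclique are neighbours, and the $(\Leftarrow)$ direction proceeds by contradiction via the same maximality argument. Your explicit construction of $(W_1\cup\{u_d\})\times(W_2\cup\{p_d\})$ spells out a step the paper leaves implicit (it simply asserts ``$C\cup\{d\}$ is a biclique''), but this is an elaboration of the same idea rather than a different route.
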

\begin{proof}
For the ``only if'' direction, suppose $d$ dominates $e$ and $n \in \set{e}\cup\text{neighbours}(e)$. Then, $n$ has to be a member of some maximal biclique $C$ of $e$. We know that $d$ is a member of $C$, and therefore $C$ is a (maximal) biclique of $d$, and therefore $n \in \set{d}\cup\text{neighbours}(d)$. For the ``if'' direction, suppose $C$ is a maximal biclique of $e$, but for the purpose of contradiction, $d\not\in C$. Then, $C \subseteq \set{e}\cup\text{neighbours}(e) \subseteq \set{d}\cup\text{neighbours}(d)$. That is, every member of $C$ is a neighbour of $d$. Thus, $C \cup\set{d}$ is a biclique, thereby rendering $C$ as not maximal, our desired contradiction.
\end{proof}

\begin{figure*}
\centering
\includegraphics[width=0.75\linewidth]{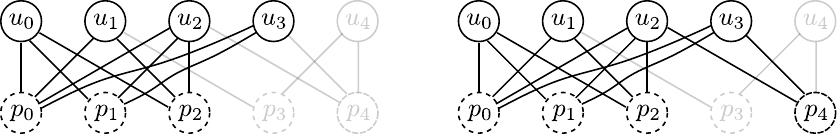}
\caption{The edge $\tuple{u_0, p_0}$ and its neighbours shown bolded to the left, and $\tuple{u_2, p_0}$ and its neighbours to the right. As the latter includes all of the former, by Theorem \ref{thm:neighbours}, $\tuple{u_2, p_0}$ dominates $\tuple{u_0, p_0}$.}
\label{fig:irreducible:neighbours}
\end{figure*}

\LinesNumbered
\DontPrintSemicolon
\begin{algorithm}
\While{not fixpoint}{
    \ForEach{edge $e$, for each neighbour $n$ of $e$}{
        \If{$n$ dominates $e$}{
            mark $n$ as removed; annotate with $e$\;
        }
    }
    \If{any edge $e$ has no neighbours}{
        remove $e$; create a new role for $e$\;
    }
}
\caption{Dominator search and edge removal}
\label{alg:dom}
\end{algorithm}

Algorithm \ref{alg:dom} is an algorithm that results from Theorem \ref{thm:neighbours}, and is exactly the prior algorithm of Ene et al.~\cite{ene08}. In Line (4), we only mark dominators for removal, and not actually remove them because otherwise, we may lose adjacency information that we will need if Algorithm \ref{alg:dom} does not remove all the edges in the input access matrix. In Line (1), by ``fixpoint'', we mean whether a dominator was indeed identified and removed in the immediately prior iteration of the \textbf{while} loop. It is possible that new dominators arise when some dominators are removed. In Line (6), we create a new role $r$ for $e = \tuple{u_e, p_e}$, i.e., for $u_e$ to acquire $p_e$ through role $r$. For any $d = \tuple{u_d, p_d}$ that dominates $e$ as identified and annotated in Line (4), we would also have $u_d$ acquire $p_d$ through $r$.

The running-time of (our version of) Algorithm \ref{alg:dom} is $O(\size{E}^4\lg{\size{E}})$ if we adopt an efficient approach for subset-checking, e.g., by treating the sets as lists, and comparing only after sorting them. The iteration of Line (1) happens $O(\size{E})$ times, and that of Line (2) $O(\size{E}^2)$ times for each iteration of Line (1). Line (3) takes time $O(\size{E}\lg\size{E})$ because we need to first generate the neighbours of $n$ and then perform the subset-check. In practice, as we show in Tables \ref{tab:dom:orig} and \ref{tab:dom:new}, our implementation runs quite fast, with progress certainly tangible.

\paragraph*{If anything remains} It is possible that all the edges in the input access matrix are marked for removal or removed as a consequence of Algorithm \ref{alg:dom}. Indeed, this is the case for all but three of the benchmark inputs of Ene et al.~\cite{ene08} --- see Table \ref{tab:dom:orig}. Otherwise, the suggestion of Ene et al.~\cite{ene08} is to adopt a reduction to clique partition, and further, reduce clique partition to graph coloring, and then deploy techniques for graph coloring. The reductions are as follows.

A clique partition in an undirected graph $G_p = \tuple{V_p, E_p}$ is a partition of the vertices $V_p$ such that the vertices in each such partition form a clique, i.e., given any pair of vertices in a partition, there is an edge between them in $E_p$. In the optimization version of the clique partition problem, we seek the fewest number of partitions; in a decision version, we ask whether there exist at most $k$ such partitions for some input integer $k$. For a reduction from biclique cover for a bipartite graph to clique partition of an undirected graph, assume the instance of the bipartite graph is $G = \tuple{V, E}$ and the output instance of clique partition is $G_p = \tuple{V_p, E_p}$. For each edge $e \in E$ we adopt a vertex $u_e \in V_p$. We introduce an edge $\tuple{u_p, v_p}\in E_p$ if and only if the edges in $G$ that correspond to $u_p$ and $v_p$ in $V_p$ are adjacent by Definition \ref{def:adjacency}. Now, there exists a biclique cover of size $k$ in $G$ if and only if there exists a clique partition of size $k$ in $G_p$.

In graph coloring we are given as input an undirected graph $G_c = \tuple{V_c, E_c}$. We seek colors (each can be perceived as an integer-label) on each vertex $u_c \in V_c$; denote such as color as $l(u_c)$. We require the fewest different number of colors (i.e., size of the range of the coloring-function $l$) such that $\tuple{u_c, v_c}\in E_c$ implies $l(u_c)\not= l(v_c)$, i.e., adjacent vertices get different colors. In a decision version, we are additionally given an integer $k$ and we ask whether at most $k$ colors suffice for $G_c$. In a reduction from clique partition, given an instance $G_p = \tuple{V_p, E_p}$ to graph coloring, an instance of which is $G_c = \tuple{V_c, E_c}$, we adopt $V_c = V_p$, and as $E_c$, we adopt the complement of $E_p$, i.e., $\tuple{u, v}\in E_c$ if and only if $\tuple{u,v}\not\in E_p$. Now, there exist at most $k$ clique partitions in $G_p$ if and only if $G_c$ can be colored with at most $k$ colors.

\paragraph*{(In)effectiveness} There are two ways to measure the goodness of Algorithm \ref{alg:dom}. One is time-performance, i.e., how long it takes to run benchmark inputs. Another, and perhaps more important, is how much of the input access matrix remains after. We present our results for these two measures in Tables \ref{tab:dom:orig} and \ref{tab:dom:new}.

\begin{table*}
\caption{The performance of Algorithm \ref{alg:dom} on the original benchmark of Ene et al.~\cite{ene08}. Our results are for our implementation \cite{mycode}, and coincide with those reported in that work.}
\label{tab:dom:orig}
\begin{tabular}{cccccc}
\toprule
{Instance} & 
{\# edges} & 
{\# edges after Alg.~1} & 
{\% edges after Alg.~1} &
{Time (min:sec)}\\
\midrule
al & $185,294$ & $97$ & $0.05$ & $11$:$49$\\
apj & $6,841$ & $0$ & $0$ & $< 1\text{s}$\\
as & $105,205$ & $44$ & $0.04$ & $9$:$35$\\
domino & $730$ & $0$ & $0$ & $< 1\text{s}$\\
emea & $730$ & $0$ & $0$ & $< 1\text{s}$\\
fw1 & $31,951$ & $0$ & $0$ & $1$:$11$\\
fw2 & $36,428$ & $0$ & $0$ & $1$:$52$\\
hc & $1486$ & $0$ & $0$ & $< 1\text{s}$\\
mailer & $5552$ & $22$ & $0.4$ & $< 1\text{s}$\\
univ & $3954$ & $0$ & $0$ & $< 1\text{s}$\\
\bottomrule
\end{tabular}
\end{table*}

\begin{table*}
\caption{The performance of Algorithm \ref{alg:dom} on the new benchmark \cite{anderer21}.}
\label{tab:dom:new}
\begin{tabular}{cccccc}
\toprule
{Instance} & 
{\# edges} & 
{\# edges after Alg.~1} & 
{\% edges after Alg.~1} &
{Time (m:s)}\\
\midrule
small 01 & $600$ & $183$ & $30.5$ &  $< 1\text{s}$\\
small 02 & $1082$ & $501$ & $46.3$ &  $< 1\text{s}$\\
small 03 & $1369$ & $0$ & $0$ &  $< 1\text{s}$\\
small 04 & $1932$ & $736$ & $38.1$ &  $< 1\text{s}$\\
small 05 & $1372$ & $0$ & $0$ &  $< 1\text{s}$\\
small 06 & $2152$ & $1044$ & $48.5$ &  $< 1\text{s}$\\
small 07 & $9371$ & $2603$ & $27.8$ & $0$:$25$ \\
small 08 & $4415$ & $1538$ & $34.8$ & $0$:$02$  \\
medium 01 & $15,567$ & $3724$ & $23.9$ & $0$:$08$\\
medium 02 & $33,959$ & $17,855$ & $52.6$ & $1$:$05$\\
medium 03 & $22,988$ & $11,855$ & $51.6$ & $0$:$18$\\
medium 04 & $23,949$ & $4322$ & $52.6$ & $0$:$09$\\
medium 05 & $47,674$ & $27,660$ & $58$ & $1$:$21$\\
medium 06 & $48,058$ & $21,954$ & $45.7$ & $1$:$07$\\
large 01 & $60,288$ & $27,092$ & $44.9$ & $1$:$29$\\
large 02 & $49,579$ & $43,379$ & $87.5$ & $1$:$10$\\
large 03 & $23,778$ & $14,000$ & $58.9$ & $0$:$10$\\
large 04 & $74,347$ & $4097$ & $5.5$ & $0$:$16$\\
large 05 & $148,067$ & $10,242$ & $6.9$ & $1$:$19$\\
large 06 & $62,292$ & $4459$ & $7.2$ & $0$:$11$\\
comp 01.1 & $49,283$ & $10,399$ & $21.1$ & $0$:$55$\\
comp 01.2 & $60,564$ & $14,563$ & $24$ & $1$:$14$\\
comp 01.3 & $56,981$ & $10,907$ & $19.1$ & $1$:$16$\\
comp 01.4 & $70,122$ & $16,527$ & $23.6$ & $1$:$42$\\
comp 02.1 & $278,809$ & $34,029$ & $12.2$ & $1$:$20$:$38$\\
comp 02.2 & $332,985$ & $82,035$ & $24.6$ & $56$:$39$\\
comp 02.3 & $411,702$ & $98,656$ & $24$ & $2$:$24$:$59$\\
comp 02.4 & $494,455$ & $145,801$ & $29.5$ & $2$:$50$:$59$\\
comp 03.1 & $646,406$ & $226,787$ & $35$ & $14$:$45$:$24$\\
comp 03.2 & $812,337$ & $377,957$ & $46.5$ & $12$:$09$:$52$\\
comp 03.3 & $1,053,246$ & $495,728$ & $47$ & $35$:$10$:$20$\\
comp 03.4 & $1,331,702$ & $762,248$ & $57.2$ & $42$:$40$:$17$\\
comp 04.1 & $573,065$ & $113,474$ & $19.8$ & $1$:$20$:$47$\\
comp 04.2 & $665,217$ & $160,180$ & $24.1$ & $1$:$03$:$14$\\
comp 04.3 & $726,448$ & $138,355$ & $19.1$ & $1$:$56$:$55$\\
comp 04.4 & $844,106$ & $183,709$ & $21.8$ & $1$:$51$:$24$\\
rw 01 & $383,216$ & $0$ & $0$ & $8$:$18$\\
\bottomrule
\end{tabular}
\end{table*}

From the standpoint of time-performance, (our implementation of) Algorithm \ref{alg:dom} performs well, and as such confirms the findings of Ene et al.~\cite{ene08} for their benchmark inputs. Given that fewer than 100 edges remain for any of their inputs, for any edges that remain, a reduction from a decision version of biclique cover to  binary Linear Programming (LP), and then use of a corresponding solver would work. No reduction to clique partition, and then further reduction to coloring, as proposed by Ene et al.~\cite{ene08} is needed. Such a reduction is to adopt a target value $k$ for the minimum size of a biclique cover, and carry out binary search on $k$. We begin the binary-search with lower-bound $1$, as the graph has at least $1$ edge, and an upper-bound of $\min\set{\size{\text{users}}, \size{\text{permissions}}, \size{\text{edges}}}$ --- we know that there exists an RBAC policy with at most that many roles. We adopt a binary variable $x_{a,b}$ for each edge $a$ and each $b\in\set{1,\ldots, k}$, where $x_{a,b} = 1$ if edge $a$ is in biclique $b$, and $0$ otherwise. Our linear program is:
\begin{align*}
&\displaystyle\sum_{b=1}^k x_{a,b} \ge 1 &\forall \text{ edges } a\\
& x_{a,b} + x_{c,b} \le 1 &\forall b\in\set{1,\ldots, k}, \forall~a, c\text{ not adjacent}
\end{align*}

The first constraint ensures that every edge is in at least one biclique, and the second ensures that two edges are in the same biclique only if they are adjacent. Table \ref{tab:bcbs} shows the time-performance for the three instances of the original set of benchmarks, and three newer benchmark inputs which have the fewest, $> 0$, of remaining edges after Algorithm \ref{alg:dom}. 

\begin{table*}
\caption{The time-performance of the reduction to LP from a decision version of biclique cover + binary search on three benchmark inputs from each of the original and newer sets for inputs that have the fewest $> 0$ edges that remain after Algorithm \ref{alg:dom}. The double lines separate the original from the new benchmarks.}
\label{tab:bcbs}
\centering
\begin{tabular}{ c  c || c  c }
\toprule
{Instance} & 
{Time (m:s)} &
{Instance} & 
{Time (m:s)}\\
\midrule
al & $36$:$05$ & small 01 & $10$:$36$\\
as & $0$:$02$ & small 02 & $> 1\text{ day}$\\
mailer & $< 1\text{s}$ & small 04 & $> 1\text{ day}$\\
\bottomrule
\end{tabular}
\end{table*}

Among the newer benchmarks, even though this approach scales for small 01, which has a modest number of remaining edges, for the next smallest inputs, which are small 02 and small 04, even a day does not suffice. Indeed, in the case of the latter, a day does not suffice for even the first iteration of the binary search. Also, there is lack of tangible progress unless the binary search progresses: we do not know whether the constraint solver is ``stuck'' \cite{solverstuck}, or it will return if we wait longer.

As for Algorithm \ref{alg:dom}, on the newer benchmark inputs, it is effective on few inputs only. As Table \ref{tab:dom:new} shows, unlike with the earlier benchmarks (Table \ref{tab:dom:orig}), a significant portion of the edges remain for all but 3 of the 37 inputs. More than the proportion of edges that remain, is the raw size. The reduction to clique partition can result in much larger graphs: 10$\times$ to 1000$\times$ in the size of the bipartite graph. In Table \ref{tab:cp:blowup}, we show this blowup for some of the newer benchmarks. A further reduction to graph coloring exacerbates this problem as the graphs after reduction to clique partition are relatively sparse. The further reduction to coloring computes the complement of the graph. If we consider, for example, the benchmark small 07, its maximum possible number of edges is $\binom{2603}{2} \approx 3 \text{ million}$, because $2603$ edges remain after Algorithm \ref{alg:dom} (see Table \ref{tab:dom:new}), each of which becomes a vertex in the graph that is the instance of clique partition. Of these, we have only about $1 \text{ million}$ edges (see Table \ref{tab:cp:blowup}), which means that the complement is roughly double in size. It is much worse with, for example, the benchmark inputs large 02 and comp 04.3; for those, as Table \ref{tab:cp:blowup} indicates, the instance of graph coloring is 10,000$\times$ the size of the bipartite graph after Algorithm \ref{alg:dom}.

\paragraph*{Takeaways} (1) Algorithm \ref{alg:dom} is a useful first step; it can trim the size of the access matrix we need to consider next. (2) Reduction to clique partition and then graph coloring does not scale for several of the benchmark inputs. The reduction to LP from biclique cover with binary search does not perform well either, as we see from Table \ref{tab:bcbs}. Therefore, we need a different approach, which is exactly our main contribution. In our approach, we remain in the realm of bipartite graphs and bicliques to mitigate the blowup in size.

\noindent
\begin{table*}
\caption{The size of the access matrix after Algorithm \ref{alg:dom}, and of the graph after reduction to clique partition, and then to coloring, for some benchmark inputs. Each size is the sum of the number of vertices and edges. The double lines separate the original from the new benchmarks.}
\label{tab:cp:blowup}
\centering
\begin{tabular}{cccc}
\toprule
{\multirow{2}{*}{Instance}} & 
{Size after} &
{Size after reduction} &
{Size after reduction}\\
& Algorithm~\ref{alg:dom} & 
to clique partition &
to coloring (approx.) \\
\midrule
small 07 & $2895$ & $1,014,976$ & $2$ million\\
medium 05 & $29,139$ & $10,708,568$ & $400$ million\\
large 02 & $45,370$ & $8,448,482$ & $900$ million\\
comp 01.3 & $13,642$ & $1,053,419$ & $58$ million\\
comp 04.3 & $162,775$ & $64,055,501$ & $9.5$ billion \\
\bottomrule
\end{tabular}
\end{table*}

\section{Our Technique: Enumerate Maximal Bicliques}
\label{sec:enum}

\begin{figure*}
\centering
\includegraphics[width=0.85\linewidth]{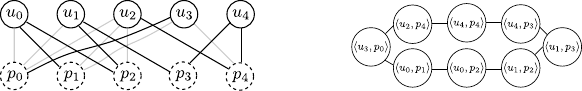}
\caption{The graph to the left shows in bold the edges that remain in the graph from Figure \ref{fig:irreducible:am} after Algorithm \ref{alg:dom}
\label{fig:irreducible:postdom}. The graph to the right helps visualize the maximal bicliques; it is the graph after reduction to clique partition.}
\end{figure*}

We now describe our approach. First, we add a constraint to the problem we consider. Rather than requiring only that the roles correspond to a minimum number of bicliques, we require each such role to correspond to a maximal biclique. This does not impact soundness, as the following theorem states.

\begin{theorem}\label{thm:maxcliques}
Suppose there exists an RBAC policy of $k$ roles for an input access matrix. Then there exists an RBAC policy of $k$ roles for that input, where each role corresponds to a maximal biclique.
\end{theorem}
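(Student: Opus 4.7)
The plan is to prove the theorem constructively: take any $k$-role RBAC policy and \emph{enlarge} each role into a maximal biclique of the input access matrix, leaving the number of roles unchanged while preserving soundness.

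First I would fix an input access matrix $G$ and an RBAC policy $A$ with roles $r_1,\ldots,r_k$ that is sound with respect to $G$. For each role $r_i$, let $U_i$ be the users assigned to $r_i$ and $P_i$ the permissions assigned to $r_i$. Because $A$ is sound, every $u\in U_i$ possesses every $p\in P_i$ in $G$, so the vertices $U_i\cup P_i$ induce a biclique $C_i$ in $G$. The elementary observation I would rely on is that any biclique of a bipartite graph can be extended to a \emph{maximal} biclique by greedily adjoining users or permissions until no further addition preserves the biclique property; this terminates because $G$ is finite. Let $M_i\supseteq C_i$ be such a maximal extension, and let $A'$ be the RBAC policy obtained from $A$ by replacing each $r_i$ with the role $r_i'$ whose users and permissions are exactly the bipartition of $M_i$.

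The next step is to verify that $A'$ is sound, i.e., that every user possesses exactly the same set of permissions in $A'$ as in $G$. In one direction, for each user $u$ and permission $p$ with $\tuple{u,p}\in G$, the policy $A$ gives $u$ access to $p$ through some $r_i$, so $\set{u,p}\subseteq U_i\cup P_i \subseteq M_i$, and hence $u$ still acquires $p$ through $r_i'$ in $A'$. In the other direction, if $u$ acquires $p$ through $r_i'$ in $A'$, then both $u$ and $p$ lie in $M_i$, which is a biclique of $G$, so $\tuple{u,p}\in G$. Thus $A'$ authorizes exactly the edges of $G$, and it has $k$ roles, each corresponding to a maximal biclique by construction.

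The only subtlety — and the step I expect to need the most care in writing up — is the direction that shows $A'$ does not \emph{over-authorize}. The worry is that enlarging $r_i$ by adding new users and permissions could confer permissions on those users that they do not have in $G$. This is precisely ruled out by the biclique property of $M_i$: every pair drawn from the two sides of $M_i$ is an edge of $G$. Once this is pinned down, the theorem follows immediately, with no case analysis and no dependence on the order in which the roles are extended (since each $M_i$ depends only on $G$, not on the prior extensions). Duplicate maximal bicliques arising from different $r_i$ expanding to the same $M$ are harmless: the count remains $k$, meeting the statement of the theorem.
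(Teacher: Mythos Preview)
Your proposal is correct and follows essentially the same constructive approach as the paper: take each role's induced biclique and extend it to a maximal biclique of the input access matrix, leaving the role count unchanged. If anything, you are more careful than the paper, which omits the explicit verification that the enlarged policy does not over-authorize; your observation that the biclique property of $M_i$ rules this out is exactly the point the paper leaves implicit.
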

\begin{proof}
By construction. Suppose an RBAC policy of $k$ roles is $A$. We know that each role $r$ corresponds to a biclique. Let the users assigned to $r$ be $\set{u_1, \ldots, u_m}$ and the permissions to which $r$ is assigned be $\set{p_1, \ldots, p_n}$. In the input, let $C$ be the biclique induced by $\set{u_1, \ldots, u_m, p_1, \ldots, p_n}$. If $C$ is maximal, there is nothing to be done. Otherwise, let $M_C$ be a maximal biclique that includes $C$, and let the set of users and permissions in $M_C$ be $\set{u_1, \ldots, u_m, u_{m+1}, \ldots, u_{m+a}, p_1, \ldots, p_n, p_{n+1}, \ldots, p_{n+b}}$. Simply change $A$ to additionally assign the users $u_{m+1}, \ldots, u_{m+a}$ to $r$, and $r$ to the permissions $p_{n+1}, \ldots, p_{n+b}$.
\end{proof}

Limiting ourselves to maximal bicliques decreases the number of possibilities we need to consider: there are fewer maximal bicliques than bicliques. From the standpoint of worst-case computational hardness, however, this makes no difference, at least under a polynomial-time Turing reduction \cite{hopcroft2001introduction}. That is, if the original problem is in \cp, so is the new one: given a solution for the original problem, we can simply test each edge for addition to a biclique to make it maximal. Of course, the converse is true as well.

We then amend the problem even further. Given the access matrix and edges marked for removal after Algorithm \ref{alg:dom}, we generate all maximal bicliques, denote the set $M$. We then address the problem of finding the minimum possible number of maximal bicliques from $M$ we need; these would correspond exactly to the minimum number of roles in the access matrix that remains after Algorithm \ref{alg:dom}. Thus, our problem after we generate the set $M$ is: given as input $\tuple{B, E_{\text{rem}}, M}$, where $B = \tuple{V_B, E_B}$ is an access matrix, $E_{\text{rem}}\subseteq E_B$ is a set of edges that have been marked for removal, and $M$ is the set of all maximal bicliques in $B$ that includes only edges from $E_B\setminus E_{\text{rem}}$ (but $E_{\text{rem}}$ is included in determining adjacency of edges in $E_B\setminus E_{\text{rem}}$), what is a minimum-sized set of roles?

\paragraph*{Is this new problem in \cp?} Towards an algorithm, this is a natural question that arises. We conjecture that the answer is `no', but do not know definitively. A basis for our conjecture is that when the minimum size of a biclique cover in a bipartite graph is small compared to the size of the instance, the problem remains \cnph; specifically, if the minimum number of bicliques in a biclique cover is $O(\lg n)$, where $n$ is the size of the input, the problem remains $\cnph$ \cite{gott05}. As another, admittedly rough, intuition, suppose the size of the input access matrix is $n$, the number of maximal bicliques in it is $\approx n$, and the minimum size of a biclique cover is $\approx \sqrt{n}$. Then, to identify a biclique cover from the input set of all maximal bicliques, a brute-force algorithm would need to check $\approx \binom{n}{\sqrt{n}}$ possibilities, which is superpolynomial in $n$.

Consequently, rather than attempting to devise an efficient algorithm, we rely on a constraint solver. Once we have computed $M$, encoded as a set of edges, from $B$ and $E_{\text{rem}}$, we adopt the following reduction to Integer Linear Programming (ILP). For each $c\in M$, we adopt a binary variable $x_c$, with the semantics that $x_c = 1$ if and only if the maximal biclique $c$ is in our solution set of bicliques. Our linear program is:

\begin{align}
\text{minimize} & \displaystyle\sum_{c\in M} x_c \nonumber\\
 \label{lp:maxsets}
 \\
\text{subject to} & \displaystyle\sum_{\set{d \in M : e \in d}} x_d \ge 1 & \forall e\in E_B \nonumber
\end{align}

The first line ensures that we minimize the number of maximal bicliques in our solution.  The second ensures that at least one maximal biclique for each edge appears in a solution. This approach is a specialization of the branch-and-price technique \cite{mehrotra96,barnhart98}, which we discuss below. So we can refer to it in other parts of this paper, we capture the above as Algorithm \ref{alg:maxsets}.

\LinesNumbered
\DontPrintSemicolon
\begin{algorithm}
Generate the set of all maximal bicliques, $M$, from $B$ and $E_{\text{rem}}$\;
Reduce to ILP (\ref{lp:maxsets})\;
Invoke constraint-solver\;
\caption{Algorithm that leverages maximal bicliques}
\label{alg:maxsets}
\end{algorithm}

Consider, for example, the access matrix from Figure \ref{fig:irreducible:am}, after we run Algorithm \ref{alg:dom} on it. The edges that remain are shown in bold to the left of Figure \ref{fig:irreducible:postdom}. To the right is a graph that helps us deduce visually the maximal bicliques in it: we adopt as vertices in this graph the edges that remain in the bipartite graph, and there is an edge between two vertices if and only if the edges corresponding to those vertices are adjacent under Definition \ref{def:adjacency} in the bipartite graph. (This is exactly the reduction to clique partition of Ene et al.~\cite{ene08}.)

The graph to the right tells us that we have two maximal bicliques per edge, each of size two, for a total of eight maximal bicliques: $\set{\tuple{u_3, p_0}, \tuple{u_2, p_4}}, \ldots, \set{\tuple{u_0, p_1}, \tuple{u_3, p_0}}$. This is our set $M$ of maximal bicliques. Assume we designate these sets $0, \ldots, 7$. Then the constraints for the edges are $x_0 + x_1 \ge 1$, $x_1 + x_2 \ge 1$, \ldots, $x_7 + x_0 \ge 1$. A solution is the four maximal bicliques $\set{\tuple{u_0, p_1}, \tuple{u_3, p_0}}$, $\set{\tuple{u_2, p_4}, \tuple{u_4, p_4}}$, $\set{\tuple{u_4, p_3}, \tuple{u_1, p_3}}$ and $\set{\tuple{u_1, p_2}, \tuple{u_0, p_2}}$. Once we create a role for each of these four maximal bicliques and assign also the users and permissions for edges that dominate as identified in Line (4) of Algorithm \ref{alg:dom}, we get the role-based policy to the right of Figure \ref{fig:irreducible:rbac}.

\paragraph*{Measure of hardness of an instance} We observe that from the original problem, i.e., given a bipartite graph, determine (the cardinality of) a minimum-sized biclique cover, such an instance of ILP is not polynomial-time computable, because the number of maximal bicliques may be exponential in the size of the bipartite graph. However, if an input does not have a large number of maximal bicliques, this approach is efficient in practice. Indeed, we argue that a meaningful measure of the hardness of an instance corresponds exactly to whether we can enumerate all maximal bicliques quickly. This is a function of the total number of maximal bicliques.

\paragraph{Branch-and-price} As we say above, our proposed approach of enumerating all maximal bicliques and then choosing from amongst them is a specialization of the branch-and-price technique that has been employed for other problems that are \cnph \cite{mehrotra96,barnhart98}. We include the corresponding code for our problem as part of our open-source offering \cite{mycode}. The approach is to start with an initial set of maximal bicliques that is not necessarily the entire set as in Algorithm \ref{alg:maxsets}, but rather a subset. We set the LP variables $x_c$ in LP (\ref{lp:maxsets}) to be real values, rather than binary. We then solve the LP with that subset of maximal bicliques. Once we obtain the objective value, we retrieve also the dual, or $\pi$, value of each constraint in the ``subject to'' of ILP (\ref{lp:maxsets}). Note that that set of constraints comprises exactly one constraint for each edge in our access matrix. Thus, we have a $\pi_e$ value corresponding to each edge $e$.  We then solve the following linear program, which weighs each variable with the corresponding $\pi$ values. (Denote as $E_B$ our set of edges in the access matrix.)

\begin{align}
\text{maximize} && \displaystyle\sum_{e\in E_B} \pi_e\cdot x_e \nonumber\\
 \label{lp:dual}
 \\
\text{subject to } && x_e + x_f \le 1 && \forall e, f \in E_B \text{ not adjacent} \nonumber
\end{align}

In this linear program LP (\ref{lp:dual}), we set the $x_e$'s to be binary. If the optimal (maximum) value is $> 1$, then we identify those $x_e$'s whose value is $1$ in the solution, and add maximal bicliques for those edges to our working set of maximal bicliques in the next iteration. If the maximum objective value is $\le 1$, then we have converged on a final solution to our original LP (\ref{lp:maxsets}). Mehrotra and Trick \cite{mehrotra96} propose exactly such an approach for graph coloring. Our algorithm is as shown in Algorithm \ref{alg:dual}.

\LinesNumbered
\DontPrintSemicolon
\begin{algorithm}
$\textit{done} \gets \textsf{False}$\;
$M\gets \text{initial set of maximal bicliques}$\;
\While{not \textit{done}}{
    Construct and solve LP (\ref{lp:maxsets}) with $M$, each $x_c$ real\;
    Collect the set of $\pi_e$ values for each $e\in E_B$\;
    Construct and solve LP (\ref{lp:dual}) with each $x_e$ binary\;
    \lIf{objective value $\le 1$}{
        $\textit{done}\gets \textsf{True}$
    }
    \Else{
        Identify $x_e$'s whose value is $1$\;
        Generate more maximal bicliques for those edges, $e$\;
        Add to $M$
    }
}
\caption{Branch-and-price algorithm that leverages duals}
\label{alg:dual}
\end{algorithm}

Some downsides are immediately apparent with this approach. If the set of all maximal bicliques is sufficiently small given our computing resources, we might as well simply employ Algorithm \ref{alg:maxsets}. If the set of all maximal bicliques is large, as is the case with our harder benchmark inputs (see Table \ref{tab:hard}), then our working set of maximal bicliques could grow quite large, to a point that our computing resources no longer suffice to build the LP models, and the run the constraint solver.

Furthermore, as our experience with the benchmark input small 07, which has more than 45 million maximal bicliques after applying Algorithm \ref{alg:dom}, suggests, this approach does not always show tangible progress. This is because we do not know whether the maximal bicliques we generate additionally and add to M in Lines (10)-(11) of Algorithm \ref{alg:dual} necessarily bring us any closer to converging to the optimal value of LP (\ref{lp:maxsets}). Figure \ref{fig:mwplot} shows  the covergence of this approach for small 07, after Algorithm \ref{alg:dom}, over almost 5 days. As we see, the initial objective value is $109$, which is quite far from the known upper-bound of $30$. There are long periods during which there is very little, or no tangible progress. The final value after the time we ran this was $43.83$; still quite far from the known upper-bound. Thus, the branch-and-price approach appears to be ineffective for the benchmarks for role mining.

\begin{figure}
\centering
\includegraphics[width=0.67\linewidth]{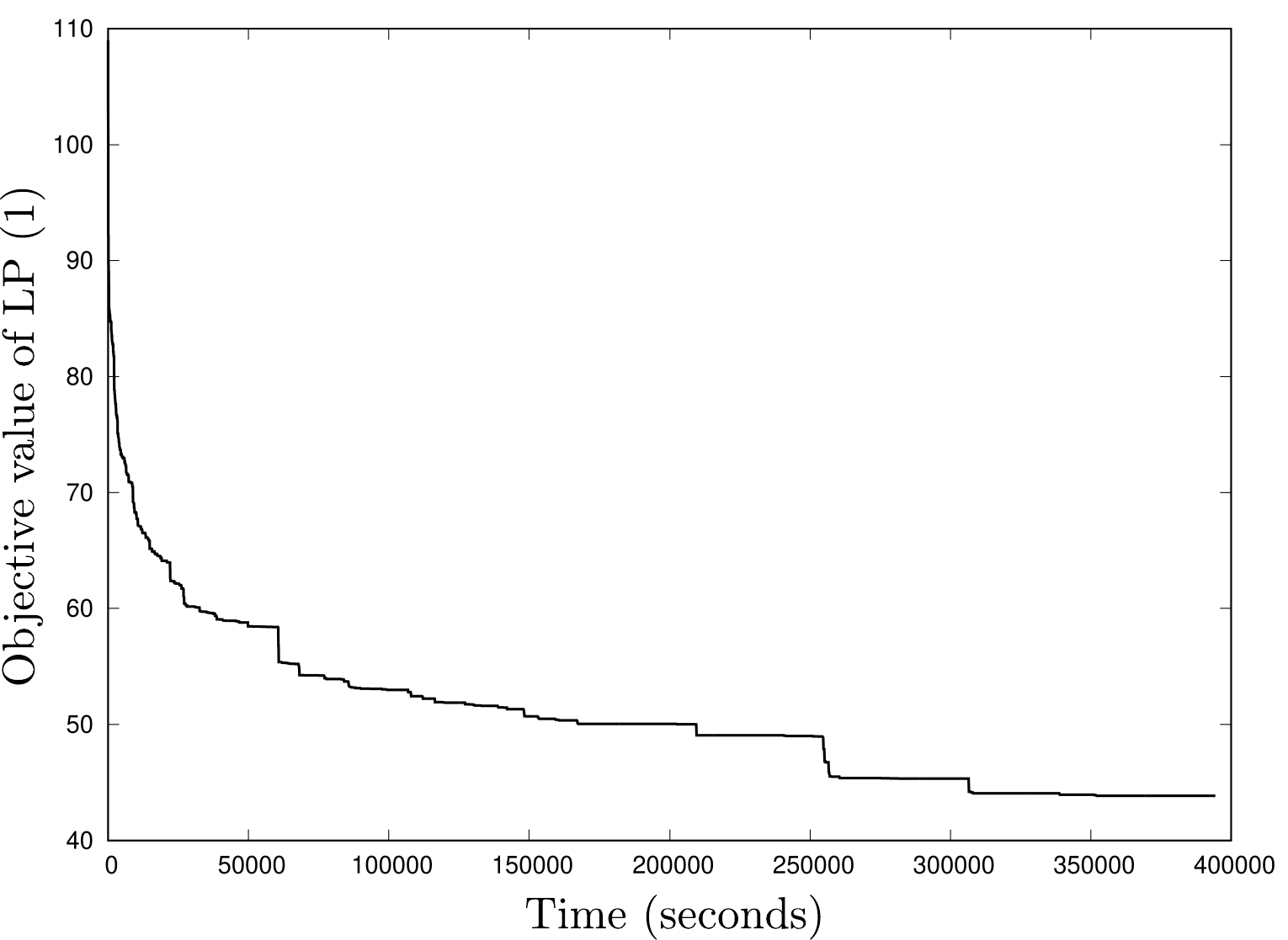}
\caption{The convergence of the objective value of LP (\ref{lp:maxsets}) for branch-and-price, i.e., Algorithm \ref{alg:dual}, over about 5 days for the benchmark small 07 after Algorithm \ref{alg:dom}.}
\label{fig:mwplot}
\end{figure}

\paragraph*{To compute $M$} Before we discuss the performance of Algorithm \ref{alg:maxsets}, we address the issue of enumerating maximal bicliques in our access matrix. We observe that this problem remains an area of active research (see, for example, \cite{zhang14,abidi20,lu22}).  The algorithm we use is based on that of Cazals and Karande \cite{cazals08} for enumerating maximal cliques in an undirected graph, which itself, as they discuss in that work, is a composition of three earlier pieces of work. More specifically, our implementation adapts the implementation of the routine \textsf{find\_cliques} of the \textsf{networkx} software package \cite{networkx}. Of course, any algorithm to generate bicliques in a bipartite graph efficiently would work for us; see our discussion about implementation considerations below, however.

An initial version of the algorithm is to maintain three disjoint sets of nodes, $R, P, X$ where $R$ is a clique so far, $P$ are nodes that are neighbours of all nodes in $R$ and therefore any may be added to $R$ while maintaining $R$'s clique-ness, and $X$ are nodes for each of which all maximal cliques have already been computed. If $P$ and $X$ are empty, we immediately return $R$ as a maximal clique (or add it to a set of maximal cliques to be returned --- see our discussion on implementation below). Otherwise we pick a node, call it $u\in P$, add it to $R$, update $P$ as $\left(P\setminus\set{u}\right) \cap N[u]$, where $N[u]$ are all the neighbours of $u$, and update $X$ as $X \cap N[u]$. Once we recursively call the algorithm with the new $R, P$ and $X$, we update $X$ as $X\cup \set{u}$. (Initially, the algorithm is invoked with $R = X = \emptyset$ and $P$ as all the vertices.) This basic algorithm is then improved to make the choice of $u$, which is called a pivot, carefully to improve efficiency --- see Cazals and Karande \cite{cazals08}.

Our choice of algorithm for maximal biclique generation is based largely on implementation considerations. We have experimented with the implementations of Zhang et al.~\cite{zhang14} and Lu et al.~\cite{lu22}, in addition to the one we adopted to then adapt for our purposes. Our choice came down to a few, somewhat minor preferences. One is that \textsf{networkx} is implemented in Python, which makes it easier for us to interface with the rest of our code, which is written in Python as well. Another, perhaps more important, is that the \textsf{networkx} implementation is a generator function: it does not immediately compute its entire set of results, but rather, returns a kind of iterator object for one maximal biclique at a time. In Algorithm \ref{alg:maxsets}, we first check whether the number of maximal bicliques is indeed upper-bounded by a threshold we set --- 3 million, in our experiments. This check can be done with high memory efficiency if we can ``throw away'' each maximal biclique after incorporating it in our count, which is eased by \textsf{networkx}'s choice of implementing as a generator. The adaptation of the \textsf{networkx} implementation required two things: changing what ``adjacent'' means to match Definition \ref{def:adjacency}, and incorporating the edges marked for removal $E_{\text{rem}}$. Both these turned out to be easy as well with that implementation. We include our adapted version of the routine as part of our open-source offering, of course with credit to the source \cite{mycode}.

\paragraph*{Performance of Algorithm \ref{alg:maxsets}} Table \ref{tab:maxsets} shows the performance of Algorithm \ref{alg:maxsets} on benchmark inputs that have $> 0$ edges after Algorithm \ref{alg:dom}, and that have a number of maximal bicliques that is smaller than a threshold we set. Our threshold is based on trial-and-error with our setup against the benchmarks; we have set it to 3 million maximal bicliques. As for the benchmark inputs not in the table, each easily has more than 3 million maximal bicliques. For example, we ran our routine to enumerate maximal bicliques over longer than $12$ hours to count the number in small 07, which has seemingly only a modest number of edges, $2603$, after Algorithm \ref{alg:dom}. And it has more than 45 million maximal bicliques. Our system certainly does not have the resources to handle a reduction to ILP (\ref{lp:maxsets}) for such an instance. From the standpoint of our experimental setup and system, we deem any instance that has more than 3 million maximal bicliques after Algorithm \ref{alg:dom} as hard.

The ``LP (\ref{lp:maxsets}) build + solve time [h:]m:s'' includes both the time to build the LP model and constraints for the constraint solver, and the time the solver takes once we issue ``\textsf{solve()}'' to it. Our data confirms that the number of maximal bicliques is indeed a good measure of hardness. The Pearson correlation coefficient, which measures the linear correlation between two sets of data-values, between the ``\# maximal bicliques'' column and the ``LP (\ref{lp:maxsets}) build + solve time [h:]m:s'' column, is 0.9132, which is a high correlation. Indeed, the size of the instance as measured by the size of the access matrix is certainly not a good measure of hardness of the instance. For example, the instance large 04 has more than 74,000 edges (see Table \ref{tab:dom:new}) while medium 06 has only around 48,000 edges. Yet the number of maximal bicliques in the latter is more than 2 million, whereas the number of maximal bicliques in the former is fewer than 2000 only. And this correlates to the stark difference in how hard they are as we report in Table \ref{tab:maxsets}: medium 06 takes longer than 5 hours, whereas large 06 takes a second only.

The last three columns are each a number of roles: the number computed by the constraint solver for LP (\ref{lp:maxsets}), ``\# roles, LP'', the total number of roles, i.e., including those computed earlier in Line (6) of Algorithm \ref{alg:dom}, ``\# roles total: Alg.~\ref{alg:dom} Line (6) + LP'', and any known exact or upper-bound for the number of roles against which we can check, ``\# roles, known bound''. We have an exact bound from the work of Ene et al.~\cite{ene08} for the al and as benchmarks, but have nothing for mailer, which was not part of their work and was presumably added later only. For the newer benchmarks of Anderer et al.~\cite{anderer21} that we show in the table, as we mention in Section \ref{sec:intro}, we have a known upper-bound for the number of roles. We observe that in all but three cases in that table, the minimum number of roles is exactly the known upper-bound. In three cases, i.e., small 01, medium 03 and large 03, the minimum is one role fewer. Note also from Table \ref{tab:dom:nroles} that small 05's minimum number of roles is 49, which is one fewer than the known upper-bound of 50. Thus, it appears that the RBAC policies from which the access matrix instances in the new benchmark were derived are quite ``tight'' to begin with, in that they are either exactly, or very close, to the fewest number of roles possible.

\begin{table*}
\caption{The number of roles as determined in Line (6) for benchmark inputs that Algorithm \ref{alg:dom} solves fully. In all cases, we match or beat a known exact/upper-bound. For the original benchmark inputs, our results match those of Ene et al.~\cite{ene08}. The double lines separate the original from the new benchmarks.}
\label{tab:dom:nroles}
\centering
\begin{tabular}{ c  c || c  c }
\toprule
{Instance} & 
{\# roles} &
{Instance} & 
{\# roles}\\
\midrule
apj & 453 & small 03 & 25 \\
domino & 20 & small 05 & 49\\
emea & 34 & rw 01 & 463\\
fw1 & 64 \\
fw2 & 10 \\
hc & 14\\
univ & 18\\
\hline
\end{tabular}
\end{table*}

In summary, the sequential composition of Algorithm \ref{alg:dom} and our technique of enumerating all maximal bicliques if any edges remain, the reduction to ILP (\ref{lp:maxsets}), and use of an ILP solver, works well in cases that we have only as many maximal bicliques as a threshold we set based on our computing resources. All the original benchmarks are addressed efficiently by this sequential composition. Of the newer benchmarks, 22 of the 37 are addressed. We point out also that it is easy to check if this sequential composition will work with tangible progress: we simply run Algorithm \ref{alg:dom}, which certainly demonstrates tangible progress, and then count the number of maximal bicliques using our implementation for enumerating them while checking against the threshold.

\noindent
\begin{table*}
\caption{The performance of our Algorithm \ref{alg:maxsets} that enumerates all maximal bicliques and then runs a constraint solver for the ILP (\ref{lp:maxsets}), for the benchmark inputs each of which has $> 0$ edges after Algorithm \ref{alg:dom}, and has $<$ 3 million maximal bicliques. The double lines separate the original from the new benchmarks.}
\label{tab:maxsets}
\centering
\begin{tabular}{cccccc}
\toprule
Instance (after & \# maximal & LP (\ref{lp:maxsets}) build + & \multirow{2}{*}{\# roles, LP} & \# roles total: Alg. & \# roles, known\\
Alg.~\ref{alg:dom}) & bicliques & solve time [h:]m:s & & \ref{alg:dom} Line (6) + LP & bound\\
\midrule
al & 114 & $<$ 1s & 27 &  398 & 398 \\
as & 61 & $<$ 1s & 19 & 178 & 178 \\
mailer & 22 & $<$ 1s & 11 & 565 & \emph{(none)} \\
\midrule
\midrule
small 01 & 449 & $<$ 1s & 20 & 24 & 25\\
small 02 & 20,800 & 0:05 & 24 & 25 & 25 \\
small 04 & 50,417 & 0:20 & 25 & 25 & 25\\
small 06 & 10,056 & 0:06 & 47 & 50 & 50\\
small 08 & 85,901 & 0:51 & 47 & 50 & 50\\
medium 01 & 15,383 & 0:12 & 92 & 150 & 150\\
medium 03 & 503,388 & 37:51 & 199 & 199 & 200 \\
medium 04 & 10,696 & 0:09 & 181 & 200 & 200\\
medium 06 & 2,325,223 & 5:01:45 & 250 & 250 & 250\\
large 01 & 726,965 & 2:16:23 & 245 & 250 & 250\\
large 02 & 664,168 & 3:55:53 & 499 & 500 & 500\\
large 03 & 34,647 & 3:06 & 466 & 499 & 500\\
large 04 & 1823 & 0:01 & 331 & 400 & 400\\
large 05 & 12,442 & 0:21 & 400 & 400 & 400\\
large 06 & 1869 & 0:01 & 442 & 500 & 500\\
comp 01.1 & 98,596 & 6:37 & 361 & 400 & 400\\
comp 01.2 & 171,043 & 17:48 & 388 & 400 & 400\\
comp 01.3 & 132,219 & 11:12 & 365 & 400 & 400\\
comp 01.4 & 288,401 & 35:08 & 389 & 400 & 400\\
\bottomrule
\end{tabular}
\end{table*}

\section{Addressing Hard Instances}\label{sec:hard}

We are left with 15 instances that are hard under our characterization under ``Measure of hardness of an instance'' in the previous section. Given the underlying computational hardness of the problem (see Section \ref{sec:intro}), we are left to resort to heuristics. For a heuristic, we adopt three axes for trade-offs. (1) Optimality --- this is our quality objective; we seek that the number of roles is minimized. This is what we primarily want. (2) Time-efficiency in practice --- we seek an approach that runs fast in a practical sense. We must appreciate that if an instance has a large number of maximal cliques, then we need to adjust our time-efficiency expectations accordingly. (3) Tangible progress --- we seek an approach with which progress is tangible. Thus, we do not blindly outsource to an opaque constraint-solver which gives us no meaningful indication of forward progress. These are axes for trade-offs, i.e., we adopt the mindset that for hard instances, we may not always be able to attain (1), optimality. However, if that is the case, we expect to gain along at least one of the other axes, (2) and/or (3). In particular, we consider (3), tangible progress, to be ``non-negotiable'', i.e., we are willing to trade-off along (1) and/or (2) to achieve (3).

With these trade-offs in mind, we first revisit a prior heuristic, also due to Ene et al.~\cite{ene08}. This prior heuristic gives us a baseline against which to compare our new heuristic. We include an implementation of that heuristic in our open-source offering \cite{mycode}.

\paragraph*{A prior heuristic} The prior heuristic \cite{ene08} is a sequential composition of two algorithms: the computation of an initial set of bicliques (i.e., roles) via a greedy technique, and then what that works calls a lattice-based postprocessing. These are as follows.

Towards computing an initial set of roles, in the input access matrix, we first pick a vertex, i.e., a user or permission. Our basis for the choice of a user/permission yields two versions of this heuristic. We pick one that either (i) has the smallest degree, i.e., fewest edges incident on it, or, (ii) the largest degree. Once we pick such a user/permission, denote it $v$, we compute the set $S$ of all its neighbours. If $v$ is a user, $S$ comprises all permissions which $v$ possesses; if $v$ is a permission, $S$ comprises all users who are authorized to $v$. Then, we pick all the vertices, denote the set $T$, each of which has an edge to every member of $S$. Thus, if $v$ is a user, $T$ comprises users; otherwise $T$ comprises permissions. We have now computed the biclique, $C = T\times S$. We know that $C\not= \emptyset$ because the user/permission with which we started, $v\in T$, and $v$ has degree at least one. We adopt $C$ as a role, remove all its edges, and repeat.

In the access matrix in Figure \ref{fig:irreducible:am} in Section \ref{sec:intro}, for example, if choose a vertex of smallest degree, i.e., strategy (i) above, we would pick one of $u_4$ or $p_3$, each of which has degree two. Suppose we pick $p_3$. Then, $S = \set{u_1, u_4}, T = \set{p_3}$. If we adopt strategy (ii), we would pick $v = u_2$, and $S = \set{p_0, p_1, p_2, p_4}, T = \set{u_2}$. We run the algorithm for each of strategies (i) and (ii), and pick the one that yields the fewest roles after the lattice-based postprocessing step below. Table \ref{tab:eneheuristic} shows the results for the hard instances. Note the number of roles can vary across repeated runs of the algorithm depending on the particular user/permission of smallest/largest degree we happen to pick in an iteration.

The second step is what Ene et al.~\cite{ene08} call a lattice-based postprocessing of the roles we compute in the first step above. That work in turn credits the work of Zhang et al.~\cite{zhang07} for this algorithm. The intent is to reduce the number of roles computed by the greedy algorithm above, or really, any prior algorithm that determines sets of roles. Use $\mathsf{perms(\cdot)}$ to denote the function that maps a role to the set of permissions to which it is assigned. The algorithm is to identify distinct roles $r, r_1, \ldots, r_k$ with the following property: $\mathsf{perms}(r) \supseteq \displaystyle\bigcup_{i = 1}^{k} \mathsf{perms}(r_i)$. That is, all of the permissions assigned to $r_1, \ldots, r_k$ are assigned to $r$. If such roles exist, we create a new role $r^\prime$ and assign it permissions $\mathsf{perms}(r) \setminus \displaystyle\bigcup_{i = 1}^{k} \mathsf{perms}(r_i)$. Now, if $\mathsf{perms}(r^\prime) = \emptyset$, we simply dicard $r^\prime$, remove the role $r$ and assign every user who is assigned to $r$ to each of the roles $r_1, \ldots, r_k$. Otherwise, we remove the role $r$, add $r^\prime$ in its place, and assign every user who is assigned to $r$ to each of $r^\prime, r_1, \ldots, r_k$. We repeat till we reach a fixpoint, i.e., no such roles $r, r_1, \ldots, r_k$ exist. Thus, the number of roles is no greater than before, and we hope to have discovered a role $r$ that can be removed. Table \ref{tab:eneheuristic} shows the results for our hard instances. We show in that table also the error in the resultant size of the role-set against known upper-bounds. The error is calculated as: $(\text{\# roles determined by heuristic} - \text{known bound})/\text{known bound} \times 100$.

Note that these heuristic algorithms are not compatible with Algorithm \ref{alg:dom}: there is no way to meaningfully leverage information about dominators. Consequently, we run this heuristic directly on the original input access matrices.

An observation that stands out in Table \ref{tab:eneheuristic} is how fast the algorithms run: in at worst about 2 minutes total in our setup. Consequently, we do not hope to outdo these heuristics from the standpoint of axis (2) for trade-offs that we discuss at the start of this section, namely, time-efficiency in practice. Also, these algorithms do well along axis (3): they demonstrate tangible progress as they run. Thus, the only question with a new heuristic is whether we can improve along axis (1), the number of roles. That is, we ask whether we can improve on the ``\# roles, error \% (rounded)'' column in Table \ref{tab:eneheuristic}.

\begin{table*}
\caption{The results on the hard instances from the new benchmark for the heuristics of Ene et al.~\cite{ene08}. The results for the \# roles are for whichever greedy choice of user/permission of (i) smallest, or, (ii) largest degree yields the best results after the lattice-based postprocessing.}
\label{tab:eneheuristic}
\begin{tabular}{cccccc}
\toprule
\multirow{2}{*}{Instance} & \# roles, & \# roles, & known & \# roles, error \% & Time \\
& greedy & post-lattice & bound & (rounded) & (m:s)\\
\midrule
small 07 & 99 & 99 & 30 & 230 & $<$ 1s\\
medium 02 & 453 & 453 & 150 & 197 & $<$ 1s\\
medium 05 & 498 & 497 & 200 & 149 & $<$ 1s\\
comp 02.1 & 3650 & 3012 & 2000 & 51 & 0:14\\
comp 02.2 & 4709 & 3385 & 2000 & 69 & 0:19\\
comp 02.3 & 4945 & 3233 & 2000 & 62 & 0:20\\
comp 02.4 & 4369 & 3747 & 2000 & 87 & 0:23\\
comp 03.1 & 9017 & 5894 & 3000 & 96 & 0:45\\
comp 03.2 & 167,69 & 6821 & 3000 & 127 & 0:59\\
comp 03.3 & 151,49 & 9085 & 3000 & 203 & 1:11\\
comp 03.4 & 190,67 & 9105 & 3000 & 204 & 1:43\\
comp 04.1 & 8449 & 5445 & 3500 & 56 & 0:55\\
comp 04.2 & 9976 & 5763 & 3500 & 65 & 2:27\\
comp 04.3 & 7472 & 5478 & 3500 & 57 & 1:23\\
comp 04.4 & 9352 & 6000 & 3500 & 71 & 1:00\\
\bottomrule
\end{tabular}
\end{table*}

\paragraph*{Our heuristic} We leverage maximal biclique enumeration somewhat differently, as an earlier step to Algorithm \ref{alg:maxsets} from the previous section. Our approach here is to find large maximal bicliques, adopt roles that correspond to them, and remove the corresponding edges to shrink the input. We can do this while simultaneously checking whether our total number of maximal bicliques falls below the threshold to adopt Algorithm \ref{alg:maxsets}. The reason we deem this approach a heuristic is that it can lead to suboptimality --- a large maximal biclique is not necessarily part of an optimal solution set of maximal bicliques. We show our approach as Algorithm \ref{alg:hard}. Our implementation is more efficient that Algorithm \ref{alg:hard} suggests in that we can enumerate large maximal bicliques simultaneously as we count the total number of maximal bicliques as we generate them.

\LinesNumbered
\DontPrintSemicolon
\begin{algorithm}
\While{\# maximal bicliques $>$ threshold}{
    Enumerate maximal bicliques till we find a large one\;
    Adopt such a large maximal biclique as a role\;
    Mark all the edges of that maximal biclique as removed\;
}
Run Algorithm \ref{alg:maxsets}
\caption{Algorithm for hard instances}
\label{alg:hard}
\end{algorithm}

Algorithm \ref{alg:hard} mentions our threshold for the number of maximal bicliques for our approach from the previous section explicitly in the \textbf{while} condition of Line (1). It introduces another threshold somewhat implicitly --- what we mean by a large maximal biclique. This is subjective, and based on some trial-and-error, we set that threshold at 200 edges for our experiments; that is, at the time of removal, we remove a maximal biclique we have found only if it would remove at least 200 edges. The value of this threshold impacts the quality of our results, trade-off axis (1), and the time-efficiency, axis (2), in our discussions above on trade-offs.

For our empirical assessment, we have had to further subdivide the hard instances into ``hard'' and ``hardest'' instances. Correspondingly, our empirical results for the former are in Table \ref{tab:hard}, and for the latter are in Table \ref{tab:hardest}. The reason for this further subdivision is that we rely on enumerating maximal bicliques, and therefore, our algorithm needs to store and process adjacency information. For the hardest instances, which are comp 03.1 -- 03.4, our setup does not have sufficient memory to do this. Consequently, we further refine Algorithm \ref{alg:hard} as follows. We break the input up into what we call pieces, where each piece has the same number of users in the input. For example, comp 03.1 has 9992 users. We have broken this up into two instances, each of 4996 users, and their corresponding permissions. Our choice of users in each piece is arbitrary; a more informed choice may yield better results, an issue we leave for future work. Once we run Algorithm \ref{alg:hard} and have the sets of roles for each piece, we first simply combine them (i.e., add the respective numbers of roles) for an RBAC policy for the entire instance, and then run the lattice-based postprocessing of Ene et al.~\cite{ene08} that we describe above on the result to reduce the total number of roles.

\paragraph*{Takeaways} Our immediate observation from Tables \ref{tab:eneheuristic}, \ref{tab:hard} and \ref{tab:hardest} is that our heuristic is superior, sometimes vastly so, to the prior heuristic from the standpoint of axis (1) for the trade-offs, the number of roles. There is only one benchmark input for which it is worse: comp 03.4. Our other observation is that there is a trade-off along axis (2), time-efficiency. Whereas the prior heuristic runs within seconds, ours can take days. Whether this trade-off between the quality of output and time is worth it, is subject to deployment considerations, i.e., wherever such an algorithm is used. We think of role-mining as a kind of security-infrastructure process, and not a mission-critical process. Consequently, the trade-off may be worth incurring. Also, one can always run multiple algorithms over time, and pick the one that suits one's choice for these trade-offs. For the one benchmark input, comp 03.4 for which we perform worse, it has some structural characteristics, a deeper investigation of which we leave for future work.

\begin{table*}
\caption{The performance of Algorithm \ref{alg:hard} on the hard instances.}
\label{tab:hard}
\begin{tabular}{cccccc}
\toprule
Instance (after & \# roles, large & \# roles, & Error, \% vs.~known & Time, Alg.~\ref{alg:hard}, Lines & Time, total\\
Alg.~\ref{alg:dom}) & max.~bicliques & total & bound (rounded) & (1)--(4), h:m:s & h:m:s\\
\midrule
small 07 & 8 & 30 & 0 & 1:50:28 & 1:51:53\\
medium 02 & 61 & 150 & 0 & 1:27:47 & 1:29:48\\
medium 05 & 76 & 200 & 0 & 1:21:50 & 1:30:47\\
comp 02.1 & 47 & 2036 & 2 & 1:18:12 & 1:49:56\\
comp 02.2 & 56 & 2045 & 2 & 1:50:14 & 3:16:34\\
comp 02.3 & 417 & 2259 & 13 & 13:53:22 & 14:54:02\\
comp 02.4 & 360 & 2223 & 11 & 19:53:00 & 21:19:12\\
comp 04.1 & 158 & 3596 & 3 & 4:57:04 & 5:40:50\\
comp 04.2 & 215 & 3605 & 3 & 6:33:30 & 7:35:43\\
comp 04.3 & 219 & 3611 & 3 & 6:55:24 & 7:31:16\\
comp 04.4 & 353 & 3703 & 6 & 10:16:43 & 11:18:55\\
\bottomrule
\end{tabular}
\end{table*}

\begin{table*}
\caption{Performance on the hardest instances.}
\label{tab:hardest}
\begin{tabular}{cccccc}
\toprule
\multirow{2}{*}{Instance} & \multirow{2}{*}{\# pieces} & \# roles, & \# roles, & Error, \% vs.~known & \multirow{2}{*}{Time}\\
& & initial & final & bound (rounded) &\\
\midrule
comp 03.1 & 2 & 5646 & 3357 & 12 & $\approx$ 1 day\\
comp 03.2 & 3 & 9217 & 3313 & 10 & $\approx$ 4 days\\
comp 03.3 & 6 & 9885 & 5273 & 76 & $\approx$ 6 days\\
comp 03.4 & 6 & 10,293 & 10,290 & 243 & $\approx$ 6 days\\
\bottomrule
\end{tabular}
\end{table*}

\section{Conclusions}\label{sec:concl}

We have proposed new algorithms for the long-standing role-mining problem for a natural optimization objective: the minimization of the number of roles. We have observed that a prior efficient algorithm is useful as a first step, but does not scale to newer benchmark inputs. We have proposed algorithms based on what we call the enumeration of maximal bicliques. Our first algorithm, which enumerates such bicliques and then employs a corresponding reduction to Integer Linear Programming (ILP), addresses more than half of all the benchmark inputs. The notion of enumerating maximal bicliques provides us not only the heart of the algorithm, but also a meaningful measure of hardness for an input instance. For the hard instances, we propose a heuristic algorithm that leverages maximal biclique enumeration differently from the first. For the instances that are not hard, our approach gives exact results, i.e., a minimum-sized set of roles. For hard instances, our approach outperforms a prior heuristic for all but one benchmark input. Our work is backed up by an extensive empirical assessment, and we make all our code available as open-source.

\paragraph*{Future work} There is considerable scope for future work. One is to explore several avenues for improvements in performance. Our implementation is memory-bound, so it will be interesting to investigate whether we can enumerate maximal bicliques without having to store the entire information about adjacency at any given time. Also, our implementation is single-threaded; it will be interesting to explore concurrent or parallel implementations. Another topic for future work is another natural optimization objective for role mining: minimizing the number of edges in the output role-based policy. Prior work \cite{vaidya09} has shown that this objective is qualitatively different from the one we adopt in this work, of minimizing the number of roles. The work of Ene et al.~\cite{ene08} introduces this objective as well, but does not explicitly tackle it, except to point out that it is also \cnph, and a decision version is in the hardest subclass of \cnp from the standpoint of efficient approximability. It is known that biclique cover has numerous other applications, such as tiling in databases. It would be interesting to explore whether our approaches are effective for benchmark and/or real-world inputs for those problems.

\section*{Acknowledgements}

We thank Ian Molloy for making the original suite of benchmarks from the work of Ene et al.~\cite{ene08} available to us. We thank Puneet Gill for early discussions and for checking with Ene et al.~\cite{ene08} as to whether an implementation is available. We thank Yuantao Ji for the running example of Figure \ref{fig:irreducible:am}, that we use in this paper.

\bibliographystyle{plain}
\bibliography{mrm}

\end{document}